\documentclass[11pt]{article}
\usepackage{amssymb}
\usepackage{fullpage}
\usepackage{amsfonts}
\usepackage{amsmath}
\usepackage{cancel}
\usepackage{color}
\usepackage{fancyhdr}
\usepackage{ mathrsfs }
\usepackage{latexsym}
\usepackage{amsmath,amssymb,amsthm}
\usepackage{epsfig}
\usepackage{tikz}
\usetikzlibrary{arrows,shapes,positioning,automata}

\newtheorem{theorem}{Theorem}


\setlength{\oddsidemargin}{-.20in}
\setlength{\evensidemargin}{-.20in} \setlength{\textwidth}{6.4in}
\setlength{\topmargin}{-0.1in} \setlength{\textheight}{8.2in}

\begin{document}

\title{A Note on the Assignment Problem with Uniform Preferences}

\author{
Jay Sethuraman
\thanks{IEOR Department, Columbia University, New York,
NY;
{\tt jay@ieor.columbia.edu}. Research supported by NSF grant
CMMI-0916453 and CMMI-1201045.}
\and
Chun Ye
\thanks{IEOR Department, Columbia University, New York,
NY;
{\tt cy2214@columbia.edu}}
}

\date{\today}

\maketitle

\begin{abstract}
Motivated by a problem of scheduling unit-length jobs with weak preferences over time-slots, the random assignment problem (also called the house allocation problem) is considered on a uniform preference domain. For the subdomain in which preferences are strict except possibly for the class of unacceptable objects, Bogomolnaia and Moulin characterized the probabilistic serial mechanism as the only mechanism satisfying equal treatment of equals, strategyproofness, and ordinal efficiency. The main result in this paper is that the natural extension of the probabilistic serial mechanism to the domain of weak, but uniform, preferences fails strategyproofness, but so does every other mechanism that is ordinally efficient and treats equals equally. If envy-free assignments are required, then any (probabilistic or deterministic) mechanism that guarantees an ex post efficient outcome must fail even a weak form of strategyproofness.
\end{abstract}

\section{Introduction}
We study the assignment problem, which is concerned with allocating a set of objects to a set of agents, each of whom wishes to receive at most one object.  Agents have preferences over the objects, and the goal is to allocate the objects to the agents in a fair and efficient manner. Further, as each agent's preference ordering over the objects is private information, we require the mechanism to be strategyproof: it should be a dominant strategy for the agents to report their preference ordering truthfully. If the objects are divisible, we can think of a fractional assignment in which an object may be allocated in varying amounts to multiple agents so that the total amount allocated of any object  is at most 1, and so that each agent receives at most one unit in all. If the objects are indivisible, one can think of a lottery over assignments, which again results in a fractional assignment matrix in which entry $(i,a)$ represents the probability that agent $i$ receives object $a$. These two views are equivalent for our purposes; while in the rest of the paper we assume that the objects are indivisible, all of our results extend to the case of divisible objects.
There is now a rich literature on such models with applications to many real-life allocation problems including allocating students to schools in various cities, the design of kidney exchanges, etc \cite{AbdulkadirogluSonmez1998,AbdulkadirogluSonmez2003,Budishetal2013,Rothetal2005}. The two prominent mechanisms that have emerged from this literature are the {\em Random Priority} (RP) mechanism and the {\em Probabilistic Serial} (PS) mechanism. The PS mechanism is stronger in terms of its efficiency and equity properties, but it is only weakly strategyproof in the strict preference domain and not strategyproof in the full preference domain; whereas the RP mechanism is strategyproof, but satisfies only a weaker version of efficiency and envy-freeness. Furthermore, Bogomolnaia and Moulin~\cite{BogomolnaiaMoulin2001} show that no strategyproof mechanism can satisfy the stronger form of efficiency and equity that the PS mechanism satisfies.

This paper is inspired by the paper of Bogomolnaia and Moulin~\cite{BogomolnaiaMoulin2002} that characterized the PS mechanism on a restricted preference domain. The PS mechanism was introduced in an earlier paper of Cres and Moulin~\cite{cresmoulin1999} that was motivated by the problem of scheduling unit-length jobs with deadlines. Suppose there are $n$ jobs, each requiring a unit processing time, and all jobs are available at time zero. As the jobs all have unit-length, one could think of the scheduling problem as one of assigning time-slots $1, 2, \ldots, n$ to the jobs, so that slot $k$ represents the interval $(k-1,k]$, and a job assigned to slot $k$ finishes at time $k$.
Jobs have deadlines and earn a non-negative utility if they complete before their deadline. Specifically, if the deadline of job $j$ is $d_j$, then the utility of assigning $j$ to slot $k$ is monotonically decreasing in $k$ until the deadline, after which it drops to zero. That is, if
$u_{j,k}$ denotes the utility of assigning job $j$ to slot $k$, then
$$u_{j,1} > u_{j,2} \ldots > u_{j,d_j} > 0 = u_{j,d_{j}+1} = u_{j,d_{j}+2}, \ldots, u_{j,n}.$$
The goal is to use a nonpricing mechanism to schedule the jobs in a fair and efficient manner based on their reported utility information. Cres and Moulin~\cite{cresmoulin1999} proposed the PS mechanism and showed that it finds an ordinally efficient and envy-free allocation (all definitions appear in the next section); furthermore, they showed that the PS mechanism is strategyproof  on this domain: note that each job/agent need only report their deadline, and they show that it is a weakly dominant strategy for each job to report its deadline truthfully. Bogomolnaia and Moulin~\cite{BogomolnaiaMoulin2002} characterize the PS mechanism on this restricted domain in two different ways: first, they show that ordinal efficiency and envy-freeness characterize the PS outcome on this restricted domain; and second, they show that it is the only strategyproof mechanism that is ordinally efficient and treats equals equally. Taken together, their result shows that the PS mechanism is perhaps the only compelling mechanism on this restricted preference domain\footnote{Cres and Moulin~\cite{cresmoulin1999} show that the PS mechanism is in fact group strategyproof, although this stronger property is not needed in the characterization results of PS.}.

In this paper we consider a slightly more general domain, again inspired by the problem of scheduling unit-length jobs. For simplicity, assume there are $n$ agents and $n$ objects, and suppose the objects are arranged in the order $(1,2,\ldots, n)$ by all the agents. Each agent's preference ranking, however, is determined by a {\em weakly} decreasing utility function over the objects, in contrast to a strictly decreasing utility function over the objects till a deadline. (A good way to visualize this preference domain is to have each agent separate the sequence of objects into indifference classes, without disturbing the common order on the objects.)
This domain is quite natural in the scheduling context, where completing a job early is always (weakly) better, but  jobs may be insensitive to completion times within a certain time interval, and these intervals may change from job to job.
The domain considered in the earlier papers is a special case in which, for each agent, all but the final indifference class has a single object.
It is then natural to ask if the two characterizations of PS extend to this domain. It turns out that the answer is negative for both characterizations. We show that the PS outcome (actually, a correspondence) is no longer the only outcome that is ordinally efficient and envy-free, nor is the PS mechanism strategyproof on this domain. Somewhat surprisingly, we show that:
\begin{itemize}
\item No weakly strategyproof mechanism can satisfy both ex post efficiency and envy freeness on this domain, when there are three or more agents; and
\item No strategyproof mechanism can satisfy both ordinal efficiency and equal treatment of equals on this domain, when there are four or more agents. 
\end{itemize}
 
The literature on random assignment problems focuses on simultaneously satisfying various notions of fairness, efficiency, and strategyproofness, and several impossibility results have been established over the last two decades~\cite{Aziz14, BogomolnaiaMoulin2001, Chambers2004, HyllandZeckhauser1979, KattaSethuraman2006, Zhou1990}.
Our two main impossibility results are strengthened versions of similar results in the literature in which preferences are drawn from richer domains. Specifically, versions of the two impossibility results have been obtained by Katta and Sethuraman~\cite{KattaSethuraman2006} on the full preference domain (where any weak ordering of the objects is permissible), and by Bogomolnaia and Moulin~\cite{BogomolnaiaMoulin2001} on the strict preference domain (where any {\em strict} ordering of the objects is permissible). Thus the surprising element in our result is that these difficulties
persist even in domains in which the preferences are severely restricted.

Our work contributes to the rich and growing literature on matching and allocation problems in which monetary transfers are not permitted. The PS mechanism and the {\rm Random Priority} mechanims are central mechanisms for such allocation
problems and have been studied extensively from several points of view, see the recent survey of Sonmez and Unver~\cite{sonmezunver2011} for an overview. This has also inspired other characterizations and extensions of the
PS mechanism~\cite{AthanassoglouSethuraman2011, Aziz14, BogomolnaiaHeo2012, Hashimotoetal2014,  Kojima2009, KojimaManea2010}. There is an equally extensive literature on models where monetary transfers are allowed to restore fairness or strategyproofness in a queueing or scheduling setting~\cite{Dolan1978,Mendelson1985,MendelsonWhang1990,Naor1969,Sujis1996}, and we refer the reader to the work of Hassin and Haviv~\cite{hh2003} for a comprehensive overview.

\section{Preliminaries}

\subsection{Model and Definitions}
An assignment problem is given by a triple $(N,O, \succsim)$, where $N = \{1, \ldots, n\}$ is the set of agents, $O = \{o_1, \ldots o_n\}$ is the set of objects, and the preference profile $\succsim = (\succsim_1, \ldots, \succsim_n)$ specifies each agent's preference ordering over the objects. We will assume that the preference relation of each agent is complete (every pair of objects is comparable) and transitive. By $a \succsim_i b$, we mean that  agent $i$ weakly prefers object $a$ to object $b$. We write $a \succ_i b$ if $i$ strictly prefers $a$ to $b$, i.e. $a \succsim_i b$ but $b \not \succsim_i a$; and we use $a \sim_i b$ when $i$ is indifferent between $a$ and $b$, i.e. $a \succsim_i b$ and $b \succsim_i a$. We assume that the indifference relation is also transitive. Thus each agent has a most-preferred subset of objects (and the agent is indifferent between all the objects within this set), followed by a most-preferred subset of objects among the remaining ones, etc.

In this paper, we shall consider the {\em uniform} preference domain in which
$o_1 \succsim_i o_2 \succsim_i \ldots \succsim_i o_n$ for every agent $i \in N$. Agents differ in their preference ordering only in their strict preference relation $\succ_i$ (and hence their indifference relation $\sim_i$). In the rest of the paper, we use the following notation for the preference ordering of the agents: all the objects within an indifference class for an agent appear within braces in that agent's preference list, and these maximal indifference classes are separated by a comma; objects are always written in subscript order; and the braces are omitted for singleton indifference classes. Thus, the preference ordering 
\[o_1 \succ_i o_2 \sim_i o_3 \sim_i o_4 \succ_i o_5\]
for agent $i$ is written as
\[i: o_1, \{o_2 \ o_3 \ o_4\}, o_5.\] 

By a mechanism, we mean a mapping from the set of all preference profiles (within this restricted domain) to a doubly stochastic matrix\footnote{If the number of agents is not the same as the number of objects, the outcome is typically a matrix in which the row-sums and the column-sums are each {\em at most} 1. One can always balance such a problem by adding dummy agents or dummy objects.}, which we call the assignment matrix for that profile. The assignment matrix is {\em deterministic} if its entries are $\{0,1\}$ (and so the outcome is a {\em matching} of the agents and objects); otherwise, it is {\em probabilistic}. If a mechanism maps each preference profile to a deterministic matrix, the mechanism is deterministic; otherwise the mechanism is probabilistic\footnote{Alternatively, we could have defined a probabilistic mechanism as a lottery over deterministic mechanisms. In this view, different lotteries are regarded as different
mechanisms, even if they result in the same assignment matrix for each preference profile.}.
As a consequence of the Birkhoff-von Neumann theorem \cite{birkhoff1946}, the outcome of a probabilistic mechanism can be implemented as a lottery over deterministic assignments.

Given two probabilistic assignments $P$ and $Q$, we say that agent $i$ prefers $P$ to $Q$ if $P_i$\footnote{$P_i$ denotes the $i$-th row of $P$} stochastically dominates $Q_i$ according to $i$'s preferences. Formally,
\[P_i \succsim_i Q_i  \Longleftrightarrow \sum_{k:k \succsim_i j} p_{ik} \geq  \sum_{k:k \succsim_i j} q_{ik}, \quad \forall j \in O.  \]
We say that $i$ strictly prefers $P$ to $Q$, denoted by $P_i \succ_i Q_i$, if at least one of the inequalities in the above definition is strict. Note that this definition is only a {\em partial} order, as an agent may not be able to compare two probabilistic allocations.
Finally, we say that $P$ \emph{stochastically dominates} $Q$, denoted by $P \succsim Q$, if $P_i \succsim_i Q_i$ for all $i \in N$, with $P_i \succ_i Q_i$ for some $i \in N$. Again, this notion of stochastic dominance defines a partial order on the set of doubly stochastic matrices.

\subsection{Desirable Properties}
We define some desirable properties of mechanisms that play an important role in the rest of the paper.

\paragraph{Ordinal Efficiency.}
An assignment matrix $P$ is \emph{ordinally efficient} if it is not stochastically dominated by any other random assignment matrix $Q$ such that $Q \succsim P$. It is well known that any ordinally
efficient matrix can be implemented as a lottery over deterministic Pareto efficient assignments. Furthermore, checking whether or not a given assignment matrix is ordinally efficient is computationally easy~\cite{BogomolnaiaMoulin2001, KattaSethuraman2006}.

\paragraph{Ex post Efficiency}
A weaker notion of efficiency that we will consider is ex post efficiency, which is satisfied by the random priority mechanism. A bi-stochastic matrix $P$ is \emph{ex post efficient} if it can be written as a convex combination of Pareto efficient assignments. 

\paragraph{Envy-Freeness.}
An assignment matrix $P$ is \emph{envy free} if the probabilistic assignment of every agent $i$ stochastically dominates the probabilistic assignment of every other agent with respect to agent $i$'s preference ordering. Let $P_i$ denote the probabilistic assignment of agent $i$ in the matrix $P$. Then, $P$ is envy-free if $P_i \succsim_i P_{i'}$ for all $i, i' \in N$. 

\paragraph{Equal Treatment of Equals.}
An assignment matrix $P$ satisfies \emph{equal treatment of equals} if agents with identical preferences get equivalent allocations. Formally, $P$ satisfies equal treatment of equals if for all $i, i' \in N$ such that $\succsim_i = \succsim_{i'} = \succsim$, we have
$$\sum_{k:k \succsim j} p_{ik} = \sum_{k:k \succsim j} p_{i'k}, \;\; \forall j \in O.$$

 
\paragraph{Strategyproofness.}	
The properties defined so far pertain to the outcome on a single profile; strategyproofness, however, is a property of the mechanism, in particular, on how the mechanism behaves on pairs of profiles in which all but one of the agents report the same preference ordering. A mechanism is \emph{strategyproof} if it is a weakly dominant strategy for each agent to report their true preference ordering. Formally, a mechanism is strategyproof if 
$$P_i(\succsim_i, \succsim_{-i}) \; \succsim_i \; P_i(\succsim'_i, \succsim_{-i}),$$
for all agents $i \in N$, and for all preference profiles $\succsim_{-i}$ of the other agents, and for every pair of preferences $\succsim_i, \succsim'_i$ that $i$ could report.
A random assignment mechanism is \emph{weakly strategyproof} if for each $i \in N$, and for each preference profile $\succsim_{-i}$ of the other agents, there does not exist preference ordering $\succsim'_i$  such that
$P_i(\succsim'_i, \succsim_{-i}) \succ_i P_i(\succsim_i, \succsim_{-i})$. In a strategyproof mechanism, the assignment under truthful reporting stochastically dominates the assignment under any other report; in a weakly strategyproof mechanism, however,  reporting her preference ordering truthfully will not result in an assignment that is stochastically dominated by the assignment under any other report.
It is clear from the definitions that strategyproofness implies weak strategyproofness, but not vice-versa.

\subsection{The Extended Probabilistic Serial Mechanism} 

We end this section with a very brief description of the EPS mechanism~\cite{KattaSethuraman2006}. The EPS mechanism, like the PS mechanism, can be described as a ``cake-eating'' mechanism in which agents consume their best object(s) at unit rate. Roughly, each agent  \emph{simultaneously} consumes her ``best set'' of available objects at a unit rate at each point in time. If all the preferences are strict, this determines a unique allocation for the agents; when agents have indifferences, this mechanism is not well-defined as each agent has a choice on how her unit rate is apportioned across the objects in her best set of objects. For instance, if agent $i$ strictly prefers $a$ to $b$, whereas agent $i'$ is indifferent between $a$ and $b$, letting both agents consume $a$ initially will result in each agent getting $1/2$ of $a$ and $1/2$ of $b$, which is clearly inefficient in the ordinal sense; if $i'$ consumes $b$ at rate 1, however, the outcome is ordinally efficient.
Building on this intuition, Katta and Sethuraman \cite{KattaSethuraman2006} proposed the EPS mechanism that:
\begin{enumerate}
\item Identifies a subset $S^{\star}$  of agents with the least collective claim over the union of their best objects $C(S^{\star})$ (in terms of average claim per agent within the subset); (We will refer to $S^{\star}$ as the bottleneck set.)
\item Assigns each agent in $S^{\star}$ an amount of $\frac{|C(S^{\star})|}{|S^\star|}$ of their favorite object(s);
\item Promises the rest of the agents an amount of at least $\frac{|C(S^{\star})|}{|S^\star|}$ of their favorite object(s); and
\item Removes the allocated objects, and recurses on the subproblem (agents in $S^{\star}$ now start consuming their favorite objects(s) out of the remaining objects.) 
\end{enumerate}
The authors showed that the bottleneck sets can be identified by solving a sequence of parametric max flow problems. We refer the reader to their paper for a complete description of the algorithm.

Note that in the full preference domain, an agent is insensitive to different probabilistic allocations of objects within the same indifference class as long as the allocations sum up to the same quantity for every indifference class. This motivates the following equivalence relation over the set of assignment matrices. Given a preference profile $\succsim$, let $\mathcal{I}_i$ be the collection of indifference classes of objects for agent $i$. For every $I \in \mathcal{I}_i$, let $p_{iI} = \sum_{o_j \in I} p_{ij}$. We say that two random assignment matrices $P$ and $Q$ are equivalent if and only if 
\[p_{iI} = q_{iI} \quad \; \forall i \in N, I \in \mathcal{I}_i.\]  
One can check that this defines an equivalence relation on the set of assignment matrices. An assignment matrix is an {\em  EPS assignment} if it is equivalent to the random assignment found by the EPS mechanism\footnote{Katta and Sethuraman~\cite{KattaSethuraman2006} do not discuss how Steps 2 and 3 described earlier are implemented; each allocation satisfying the conditions of Steps 2 and 3 may give a different assignment matrix but these are all equivalent.}.

\section{Main Results}
Bogomolnaia and Moulin \cite{BogomolnaiaMoulin2002} showed that if the preference domain is further restricted so that the acceptable set of objects for each agent $i$ is the set $\{o_1, o_2 \ldots, o_{k_i}\},$ and if the agents have strict (and uniform) preferences over their acceptable objects, then the PS outcome is characterized by ordinal efficiency and envy-freeness, and that it is the only strategyproof mechanism that guarantees ordinal efficiency and equal treatment of equals. We show that neither one of these results holds when the agents have weak preferences.

\subsection{Non-uniqueness of Ordinally Efficient and Envy Free Assignments}
The EPS mechanism finds an equivalence class of ordinally efficient and envy free assignments for each preference profile. However, there are other assignments with these properties.
For the preference profile

\[\begin{tabular}{lllll}
1:& $o_1$, & $\{o_2$& $o_3\}$,& $o_4$ \\
2:& $o_1$, & $\{o_2$& $o_3\}$,& $o_4$ \\
3:& $\{o_1$ & $o_2\}$,& $o_3$,& $o_4$ \\  
4:& $\{o_1$ & $o_2\}$,& $o_3$,& $o_4$ \\ 
\end{tabular}
\]
the following assignment 
\[
\begin{tabular}{lllll}
& $o_1$& $o_2$ & $o_3$ & $o_4$ \\
1:& $\frac{1}{4}$ & $0$& $\frac{1}{2}$& $\frac{1}{4}$ \\
2:& $\frac{1}{4}$ & $0$& $\frac{1}{2}$& $\frac{1}{4}$ \\
3:& $\frac{1}{4}$ & $\frac{1}{2}$& $0$& $\frac{1}{4}$ \\  
4:& $\frac{1}{4}$ & $\frac{1}{2}$& $0$& $\frac{1}{4}$ \\ 
\end{tabular}
\]
is ordinally efficient and envy free.
However, the EPS mechanism will not compute the above assignment since agents $1$ and $2$ strictly prefers $o_1$ to $o_2$ whereas agents $3$ and $4$ are indifferent between $o_1$ and $o_2$. Thus, in the EPS mechanism, agents $3$ and $4$ consume $o_2$ first so as to not compete with agents $1$ and $2$ for their unique best object. Consequently, EPS finds the following assignment
\[
\begin{tabular}{lllll}
& $o_1$& $o_2$ & $o_3$ & $o_4$ \\
1:& $\frac{1}{2}$ & $0$& $\frac{1}{4}$& $\frac{1}{4}$ \\
2:& $\frac{1}{2}$ & $0$& $\frac{1}{4}$& $\frac{1}{4}$ \\
3:& $0$ & $\frac{1}{2}$& $\frac{1}{4}$& $\frac{1}{4}$ \\  
4:& $0$ & $\frac{1}{2}$& $\frac{1}{4}$& $\frac{1}{4}$ \\ 
\end{tabular}
\] 
Clearly the two assignments do not belong to the same equivalence class: agents 1 and 2 strictly prefer the latter, whereas agents 3 and 4 strictly prefer the former.

\subsection{Impossibility Results}

\begin{theorem} \label{Imp: OE, EF wSP}
For $n \geq 3$, any mechanism that is both ex-post efficient and envy-free is not even weakly strategyproof in the uniform preference domain.
\end{theorem}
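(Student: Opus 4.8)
The plan is to settle the case $n=3$ by an explicit two–profile argument and then obtain $n>3$ by padding.

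\textbf{The case $n=3$.} I would work with the two profiles
\[
R^{0}:\quad 1\colon o_1,o_2,o_3 \qquad 2\colon \{o_1\ o_2\},o_3 \qquad 3\colon o_1,\{o_2\ o_3\},
\]
\[
R^{1}:\quad 1\colon o_1,o_2,o_3 \qquad 2\colon o_1,o_2,o_3 \qquad 3\colon o_1,\{o_2\ o_3\},
\]
which differ only in the report of agent~$2$. The heart of the argument is that each of $R^{0}$ and $R^{1}$ admits a \emph{unique} assignment that is simultaneously ex post efficient and envy free: on $R^{0}$ it gives agent~$2$ the row $(0,\tfrac{3}{4},\tfrac{1}{4})$, and on $R^{1}$ it gives agent~$2$ the row $(\tfrac{1}{3},\tfrac{1}{2},\tfrac{1}{6})$. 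Granting this, let agent~$2$'s true preference be $\{o_1\ o_2\},o_3$ and fix the reports of agents~$1$ and~$3$ as above. Any ex post efficient and envy-free mechanism must output the unique such assignment on each profile; by reporting the strict order $o_1,o_2,o_3$ (i.e.\ inducing $R^{1}$) agent~$2$ raises the cumulative weight placed on its top indifference class $\{o_1,o_2\}$ from $0+\tfrac{3}{4}=\tfrac{3}{4}$ to $\tfrac{1}{3}+\tfrac{1}{2}=\tfrac{5}{6}$, so the misreported row strictly stochastically dominates the truthful row for agent~$2$'s true preference. Hence the mechanism is not even weakly strategyproof.

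\textbf{Proving the uniqueness claims.} For each of $R^{0}$ and $R^{1}$ the computation has two parts. (i) Enumerate the Pareto-efficient matchings. The delicate point is that, because agents $2$ and/or $3$ have a two-element indifference class, several matchings are Pareto dominated through an ``indifferent swap''; for instance on $R^{0}$ the matching $(1\!\to\!o_1,\,2\!\to\!o_3,\,3\!\to\!o_2)$ is dominated by $(1\!\to\!o_1,\,2\!\to\!o_2,\,3\!\to\!o_3)$ since agent~$3$ is indifferent between $o_2$ and $o_3$ while agent~$2$ strictly gains. One is left with three Pareto-efficient matchings for $R^{0}$ and four for $R^{1}$. (ii) Every ex post efficient assignment is a convex combination of these matchings; imposing envy-freeness --- equivalently, the two agents sharing a preference must get equal rows, and the remaining stochastic-dominance inequalities between the distinct types, which turn out to be tight --- becomes a linear system in the mixing weights with a unique solution, producing the rows above. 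I expect the main obstacle to be exactly step~(i): a single overlooked ``indifferent swap'' would enlarge the ex post efficient polytope and could destroy uniqueness, so the Pareto-domination bookkeeping in the presence of indifferences must be done carefully.

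\textbf{The case $n>3$.} I would pad by appending $\{o_4\ o_5\ \cdots\ o_n\}$ as a common worst indifference class to each of agents $1,2,3$, and by letting every agent $j\in\{4,\dots,n\}$ be indifferent among all $n$ objects. In any Pareto-efficient matching no such $j$ can hold an object of $\{o_1,o_2,o_3\}$: otherwise some agent in $\{1,2,3\}$ holds an object of $\{o_4,\dots,o_n\}$, which that agent strictly disprefers to the object held by $j$ while $j$ is indifferent --- an improving swap. Hence every ex post efficient assignment restricts to an ex post efficient doubly stochastic assignment on the subproblem $(\{1,2,3\},\{o_1,o_2,o_3\})$, and the envy-free conditions among agents $1,2,3$ coincide with those of that subproblem. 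The $n=3$ argument then applies verbatim, the deviation $o_1,o_2,o_3,\{o_4\ \cdots\ o_n\}$ of agent~$2$ being a legal report in the uniform domain.
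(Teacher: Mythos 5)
Your proposal is correct and follows essentially the same route as the paper: your $R^{1}$ and $R^{0}$ are the paper's Profile 1 and Profile 2 with agents 2 and 3 relabeled, the unique ex post efficient and envy-free assignments you identify (rows $(\tfrac{1}{3},\tfrac{1}{2},\tfrac{1}{6})$ and $(0,\tfrac{3}{4},\tfrac{1}{4})$ for the deviating agent) agree with the paper's computation, and the deviation of the agent with top class $\{o_1,o_2\}$ to the strict report is exactly the paper's weak-strategyproofness violation. The only (harmless) differences are that you establish uniqueness by enumerating the Pareto-efficient matchings and solving for the mixing weights, whereas the paper first parametrizes the envy-free polytope and then cuts it down by ex post efficiency, and that your padding for $n>3$ (a common worst class $\{o_4\ \cdots\ o_n\}$ for agents 1--3 and fully indifferent dummies) differs in construction from the paper's (strict tails for agents 1--3 and agent $j$ indifferent over the first $j$ objects) while serving the identical purpose of forcing the first three objects onto the first three agents in every Pareto-efficient matching.
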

\begin{proof}

We first show the impossibility result for $n=3$. Consider Profile 1 (below). Clearly, the set of envy-free (EF) assignments at this profile is as described for some $0 \leq y \leq 1/6$.
\[\begin{tabular}{llll}
\multicolumn{4}{l}{Profile 1} \\
1:& $o_1$, & $o_2$,& $o_3$\\
2:& $o_1$, & $\{o_2$ & $o_3\}$\\  
3:& $o_1$, & $o_2$,& $o_3$\\ 
\end{tabular}
\quad 
\begin{tabular}{llll}
& $o_1$& $o_2$ & $o_3$ \\
1:& $\frac{1}{3}$ & $\frac{1}{2}-y$& $\frac{1}{6}+y$\\
2:& $\frac{1}{3}$ & $2y$& $\frac{2}{3}-2y$\\
3:& $\frac{1}{3}$ & $\frac{1}{2}-y$& $\frac{1}{6}+y$\\  
\end{tabular}
\]

By the structure of the preferences in Profile 1, agent 2 cannot receive object $o_2$ in any Pareto efficient assignment, as there is always a Pareto improvement with the agent who is assigned $o_3$. Thus $y = 0$ in any ex-post efficient (EPE) assignment. 

Similarly, in Profile 2 below, the set of envy-free assignments is as described
for some $0 \leq w \leq \frac{1}{6}$ and $0 \leq z \leq \frac{1}{12}$.
\[\begin{tabular}{llll}
\multicolumn{4}{l}{Profile 2} \\
1:& $o_1$, & $o_2$,& $o_3$\\
2:& $o_1$, & $\{o_2$& $o_3\}$\\  
3:& $\{o_1$ & $o_2\}$,& $o_3$\\ 
\end{tabular}
\quad 
\begin{tabular}{llll}
& $o_1$& $o_2$ & $o_3$ \\
1:& $\frac{1}{2}-w$ & $\frac{1}{4}+w-z$& $\frac{1}{4}+z$\\
2:& $\frac{1}{2}-w$ & $w+2z$& $\frac{1}{2}-2z$\\
3:& $2w$ & $\frac{3}{4}-2w-z$& $\frac{1}{4}+z$\\  
\end{tabular}
\]
Again, agent 2 cannot be assigned $o_2$ in any Pareto efficient assignment, as there is always a Pareto improvement with the agent assigned $o_3$, so $w = z = 0$ in
any ex-post efficient assignment. 

Observe that the properties of ex-post efficiency and envy-freeness determine a unique assignment in both Profile 1 and Profile 2. Furthermore, agents 1 and 2 have the same preferences in both profiles, but agent 3's allocation in Profile 1 stochastically dominates his allocation in Profile 2, implying a failure of weak strategyproofness.

For $n \geq 4$, extend each of the profiles as follows: the first 3 agents have exactly the same preference ordering over the first 3 objects; and they have strict preferences over the objects $o_4, o_5, \ldots, o_n$; finally, agent $i$ (for $i \geq 4$) is indifferent between the first $i$ objects, after which he has strict preferences over the others. That is, $i$'s preference ordering is
\[j: \{o_1 \ldots o_i\}, o_{i+1}, \ldots, o_n.\]
It is straightforward to check that agent $i$ receives object $o_i$ in every Pareto efficient assignment, and so the first 3 agents must be allocated the first 3 objects, leading to the same two profiles analyzed earlier.
\end{proof}

As the EPS mechanism is ordinally efficient (and so ex-post efficient as well) and envy-free, an immediate consequence is that the EPS mechanism is not weakly strategyproof on the uniform domain. The Random Priority (RP) mechanism, adapted to the setting of indifferences, is both strategyproof and ex-post efficient, and so fails envy-freeness. For the domain considered by Bogomolnaia and Moulin, neither of these results hold, as the PS mechanism is strategyproof and the RP mechanism is envy-free.

Next, we show that if we relax envy freeness to equal treatment of equals, but strengthen  weak strategyproofness and ex-post efficiency to strategyproofness and ordinal efficiency respectively, a similar impossibility result holds for the uniform preference domain.

\begin{theorem}
For $n \geq 4$, any mechanism that satisfies ordinal efficiency and equal treatment of equals is not strategyproof in the uniform preference domain.
\end{theorem}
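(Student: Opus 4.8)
The plan is to follow the template used for the preceding theorem and for the analogous impossibility results of \cite{BogomolnaiaMoulin2001,KattaSethuraman2006}: reduce to $n=4$, show that ordinal efficiency together with equal treatment of equals (ETE) and strategyproofness leaves no freedom on a small family of profiles, and then exhibit a one-agent deviation between two such ``pinned'' profiles that no strategyproof mechanism can accommodate. For the reduction I would reuse the padding device from the proof of the preceding theorem: for $i \ge 5$ let agent $i$ report $\{o_1,\dots,o_i\},o_{i+1},\dots,o_n$ and let agents $1,\dots,4$ be strict and in the common order over $o_5,\dots,o_n$. Since an ordinally efficient assignment is a lottery over Pareto-efficient matchings, agent $i$ receives $o_i$ with probability one for each $i \ge 5$, so the instance restricts to four agents sharing $o_1,o_2,o_3,o_4$; hence it suffices to derive the contradiction for $n=4$.

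The engine of the argument is the following consequence of ordinal efficiency, which I would isolate as a lemma: in the uniform domain, if agent $i$ is indifferent between $o_k$ and $o_\ell$ with $k<\ell$ while some agent $j$ has $o_k \succ_j o_\ell$, then no ordinally efficient assignment can put positive weight on both ``$i$ gets $o_k$'' and ``$j$ gets $o_\ell$'': transfer a little $o_k$ from $i$ to $j$ in exchange for $o_\ell$, leaving $i$ indifferent and $j$ strictly better off. In particular an agent indifferent between $o_1$ and $o_2$ is pushed entirely off $o_1$ as soon as another agent strictly tops $o_1$, and likewise deeper in the order. Using this, ETE already forces each agent's allocation on the all-strict profile to be $(1/4,1/4,1/4,1/4)$; and if agent $1$ alone coarsens to $\{o_1,o_2\},o_3,o_4$, strategyproofness against the all-strict profile (the finer true preference gives $\sum_{j\le k} f_{1,o_j} \le k/4$ for all $k$, the coarser one gives the reverse for $k\ge 2$) pins agent $1$'s class sums, the deferral lemma forces $f_1 = (0,1/2,1/4,1/4)$, and column sums with ETE fix everyone else.

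Iterating, I would build an ``atlas'' of fully determined outcomes: for each $m$, the profile in which the first $m$ agents share one common adjacent coarsening ($\{o_1,o_2\}$, $\{o_2,o_3\}$, $\{o_3,o_4\}$, $\{o_1,o_2,o_3\}$, \dots) and the remaining agents are strict is a one-agent deviation from an already-pinned profile, so the two strategyproofness inequalities, the deferral lemma, and column-sum/ETE bookkeeping determine $f$ up to the equivalence relation that is all that matters for strategyproofness and ordinal efficiency; the same holds for several mixed profiles, e.g.\ two agents indifferent over $\{o_1,o_2\}$ and two over $\{o_3,o_4\}$, or two over $\{o_1,o_2\}$ and one over $\{o_3,o_4\}$, which come out completely rigid.

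The last step is to locate two profiles $P$ and $P' = (\succsim_1',\succsim_{-1})$ in this atlas that differ only in agent $1$'s report and neither of which was pinned using the other, and for which the forced values $f(P)_1$ and $f(P')_1$ violate the pair of inequalities strategyproofness demands --- equivalently, one of the two reports is a profitable manipulation for the matching true preference. I expect this to be the main obstacle. In the ``one coarsening at a time'' chains the forced deferrals turn out to be exactly balanced (the deviator's class sums are preserved), so no single such deviation helps; the contradiction should instead come from a profile in which two distinct adjacent indifference classes are simultaneously active, so that agent $1$ is squeezed by an OE-deferral from below while ETE (acting through the other three agents) blocks any compensation from above --- which is also where $n\ge4$ is needed, since one wants two agents on each side of the squeeze. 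One clean way this could close is that the atlas forces, on some such profile, an assignment that is not ordinally efficient after all, which is already the contradiction; checking whether the forced assignments really are ordinally efficient (or spotting where they fail to be) is the bulk of the remaining work.
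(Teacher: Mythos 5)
Your setup matches the paper's for the first several steps: the same padding reduction to $n=4$ (agent $i\ge 5$ reports $\{o_1,\dots,o_i\},o_{i+1},\dots,o_n$ and is forced onto $o_i$), the same ordinal-efficiency ``trading'' lemma (an agent indifferent between $o_k$ and $o_\ell$ cannot retain probability on $o_k$ while an agent who strictly prefers $o_k$ to $o_\ell$ holds probability on $o_\ell$), and the same first two pinned profiles (the all-strict profile giving uniform $\tfrac14$'s, and the single $\{o_1,o_2\}$-coarsening giving that agent $(0,\tfrac12,\tfrac14,\tfrac14)$). Your structural intuition is also correct: the one-coarsening-at-a-time chains are internally consistent, and the contradiction must come from a profile in which a $\{o_1,o_2\}$ indifference and a $\{o_2,o_3\}$ indifference are simultaneously active. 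But the proof does not close, and you say so yourself: you never exhibit the pair (in fact, the chain) of profiles whose forced allocations are incompatible, and you defer ``the bulk of the remaining work'' to a verification you have not done. That remaining work is the theorem.

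For comparison, the paper's argument uses eight profiles: profiles $1$--$4$ successively coarsen agents $4$, then $3$, then $1$ to $\{o_1,o_2\}$ while agent $2$ stays strict; profiles $5$--$8$ repeat the same cascade with agent $2$ reporting $o_1,\{o_2,o_3\},o_4$ throughout. Each profile's outcome is forced (sometimes only up to one or two free parameters, which turn out not to matter) by strategyproofness against its predecessors together with OE and ETE. The terminal profile $8$ is reached by SP from two different predecessors: from profile $4$, agent $2$'s row is forced to $(\tfrac12,0,\tfrac14,\tfrac14)$; from profile $7$ together with ETE, agents $1$, $3$, $4$ are each forced to receive $o_3$ with probability $\tfrac13$. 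The contradiction is therefore not a failure of ordinal efficiency of a forced assignment, as you conjecture it might be, but a plain infeasibility: the $o_3$ column sums to $3\cdot\tfrac13+\tfrac14=\tfrac54>1$, so no assignment matrix exists at profile $8$ consistent with OE, ETE, and SP against the other seven profiles. Without identifying such a terminal configuration and carrying the forced values through the two converging SP chains, your atlas is only a (correct) collection of necessary conditions, not an impossibility.
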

\begin{proof}
We first show the result for $n=4$. Consider the following 8 profiles
\[\begin{tabular}{lllll}
\multicolumn{5}{l}{Profile 1} \\
1:& $o_1$, & $o_2$,& $o_3$,& $o_4$ \\
2:& $o_1$, & $o_2$,& $o_3$,& $o_4$ \\
3:& $o_1$, & $o_2$,& $o_3$,& $o_4$ \\  
4:& $o_1$, & $o_2$,& $o_3$,& $o_4$ \\ 
\end{tabular}
\quad 
\begin{tabular}{lllll}
\multicolumn{5}{l}{Profile 2} \\
1:& $o_1$, & $o_2$,& $o_3$,& $o_4$ \\
2:& $o_1$, & $o_2$,& $o_3$,& $o_4$ \\
3:& $o_1$, & $o_2$,& $o_3$,& $o_4$ \\  
4:& $\{o_1$ & $o_2\}$,& $o_3$,& $o_4$ \\ 
\end{tabular}
\]

\[
\begin{tabular}{lllll}
\multicolumn{5}{l}{Profile 3} \\
1:& $o_1$, & $o_2$,& $o_3$,& $o_4$ \\
2:& $o_1$, & $o_2$,& $o_3$,& $o_4$ \\
3:& $\{o_1$ & $o_2\}$,& $o_3$,& $o_4$ \\  
4:& $\{o_1$ & $o_2\}$,& $o_3$,& $o_4$ \\ 
\end{tabular}
\quad 
\begin{tabular}{lllll}
\multicolumn{5}{l}{Profile 4} \\
1:& $\{o_1$ & $o_2\}$,& $o_3$,& $o_4$ \\
2:& $o_1$, & $o_2$, & $o_3$,& $o_4$ \\
3:& $\{o_1$ & $o_2\}$,& $o_3$,& $o_4$ \\  
4:& $\{o_1$ & $o_2\}$,& $o_3$,& $o_4$ \\ 
\end{tabular}
\]

\[
\begin{tabular}{lllll}
\multicolumn{5}{l}{Profile 5} \\
1:& $o_1$, & $o_2$,& $o_3$,& $o_4$ \\
2:& $o_1$, & $\{o_2$ & $o_3\}$,& $o_4$ \\
3:& $o_1$, & $o_2$,& $o_3$,& $o_4$ \\  
4:& $o_1$, & $o_2$,& $o_3$,& $o_4$ \\ 
\end{tabular}
\quad
\begin{tabular}{lllll}
\multicolumn{5}{l}{Profile 6} \\
1:& $o_1$, & $o_2$,& $o_3$,& $o_4$ \\
2:& $o_1$, & $\{o_2$ & $o_3\}$,& $o_4$ \\
3:& $o_1$, & $o_2$,& $o_3$,& $o_4$ \\  
4:& $\{o_1$ & $o_2\}$,& $o_3$,& $o_4$ \\ 
\end{tabular}
\]

\[
\begin{tabular}{lllll}
\multicolumn{5}{l}{Profile 7} \\
1:& $o_1$, & $o_2$,& $o_3$,& $o_4$ \\
2:& $o_1$, & $\{o_2$& $o_3\}$,& $o_4$ \\
3:& $\{o_1$ & $o_2\}$,& $o_3$,& $o_4$ \\  
4:& $\{o_1$ & $o_2\}$,& $o_3$,& $o_4$ \\ 
\end{tabular}
\quad 
\begin{tabular}{lllll}
\multicolumn{5}{l}{Profile 8} \\
1:& $\{o_1$ & $o_2\}$,& $o_3$,& $o_4$ \\
2:& $o_1$, & $\{o_2$& $o_3\}$,& $o_4$ \\
3:& $\{o_1$ & $o_2\}$,& $o_3$,& $o_4$ \\  
4:& $\{o_1$ & $o_2\}$,& $o_3$,& $o_4$ \\ 
\end{tabular}
\]
We shall show that in profile $8$ there is no probabilistic assignment that simultaneously satisfies ordinal efficiency (OE), equal treatment of equals (ETE), and strategyproofness (SP) in relation to the first seven profiles. 

First, we compute the probability assignment for profile $1$. Notice that the only assignment that satisfies ETE is

\[\begin{tabular}{lllll}
\multicolumn{5}{l}{Profile 1} \\
1:& $o_1$, & $o_2$,& $o_3$,& $o_4$ \\
2:& $o_1$, & $o_2$,& $o_3$,& $o_4$ \\
3:& $o_1$, & $o_2$,& $o_3$,& $o_4$ \\  
4:& $o_1$, & $o_2$,& $o_3$,& $o_4$ \\ 
\end{tabular}
\quad 
\begin{tabular}{lllll}
& $o_1$& $o_2$ & $o_3$ & $o_4$ \\
1:& $\frac{1}{4}$ & $\frac{1}{4}$& $\frac{1}{4}$& $\frac{1}{4}$ \\
2:& $\frac{1}{4}$ & $\frac{1}{4}$& $\frac{1}{4}$& $\frac{1}{4}$ \\
3:& $\frac{1}{4}$ & $\frac{1}{4}$& $\frac{1}{4}$& $\frac{1}{4}$ \\  
4:& $\frac{1}{4}$ & $\frac{1}{4}$& $\frac{1}{4}$& $\frac{1}{4}$ \\ 
\end{tabular}
\]

Now we consider profile $2$. Let $p_{ij}$ be the probability that agent $i$ is assigned the object $o_j$. By ordinal efficiency, $p_{41} = 0$. For otherwise $p_{42} < 1$, which means that one of $p_{12}$, $p_{22}$, $p_{32}$ is strictly positive; this agent can exchange a small amount of $o_2$ for an equal amount of $o_1$ from agent 4, without altering any of the other allocations to obtain a new allocation matrix that stochastically dominates the current one, and this violates ordinal efficiency.

By strategyproofness, we must have that $p_{41} + p_{42} = \frac{1}{2}$, because if it were not the case, then there is a profitable deviation of agent $4$ either from profile $1$ to profile $2$ or vice versa. Thus, we get that $p_{42} = \frac{1}{2}$ since $p_{41} = 0$. Similarly, by strategyproofness, we have that $p_{41}+p_{42}+p_{43} = \frac{3}{4}$, which implies that $p_{43} = \frac{1}{4}$ and $p_{44} = \frac{1}{4}$. \\
Finally by ETE, we know the probability assignment of the first three agents must be identical, thus we get the following assignment:
\[
\begin{tabular}{lllll}
\multicolumn{5}{l}{Profile 2} \\
1:& $o_1$, & $o_2$,& $o_3$,& $o_4$ \\
2:& $o_1$, & $o_2$,& $o_3$,& $o_4$ \\
3:& $o_1$, & $o_2$,& $o_3$,& $o_4$ \\  
4:& $\{o_1$ & $o_2\}$,& $o_3$,& $o_4$ \\ 
\end{tabular}
\quad 
\begin{tabular}{lllll}
& $o_1$& $o_2$ & $o_3$ & $o_4$ \\
1:& $\frac{1}{3}$ & $\frac{1}{6}$& $\frac{1}{4}$& $\frac{1}{4}$ \\
2:& $\frac{1}{3}$ & $\frac{1}{6}$& $\frac{1}{4}$& $\frac{1}{4}$ \\
3:& $\frac{1}{3}$ & $\frac{1}{6}$& $\frac{1}{4}$& $\frac{1}{4}$ \\  
4:& $0$ & $\frac{1}{2}$& $\frac{1}{4}$& $\frac{1}{4}$ \\ 
\end{tabular}
\]

Now we consider profile $3$. In profile $3$, by SP in relation to profile $2$, we must have that $p_{31} + p_{32} = \frac{1}{2}$, $p_{33} = \frac{1}{4}$, and $p_{34} = \frac{1}{4}$.  By ETE, the same assignment for agent $4$ satisfy the same constraints as that of agent $3$. By OE, $p_{31} = p_{41} = 0$, because either $p_{31} > 0$ or  $p_{41} > 0$ would imply that $p_{32} + p_{42}<1$ (as $p_{31} + p_{32} + p_{41}+p_{42} = 1$)  or equivalently that $p_{12} + p_{22} > 0$. Then again we have a situation where agent $1$ or $2$ can exchange a small amount of $o_2$ for an equal amount of $o_1$ from agent $3$ or $4$, which leads to a new assignment matrix that stochastically dominates the current one, violating OE. Thus, OE and SP together determines the probabilistic assignment for agents $3$ and $4$. Now, we can fill in the assignments for agents $1$ and $2$ via ETE to get:
\[
\begin{tabular}{lllll}
\multicolumn{5}{l}{Profile 3} \\
1:& $o_1$, & $o_2$,& $o_3$,& $o_4$ \\
2:& $o_1$, & $o_2$,& $o_3$,& $o_4$ \\
3:& $\{o_1$ & $o_2\}$,& $o_3$,& $o_4$ \\  
4:& $\{o_1$ & $o_2\}$,& $o_3$,& $o_4$ \\ 
\end{tabular}
\quad 
\begin{tabular}{lllll}
& $o_1$& $o_2$ & $o_3$ & $o_4$ \\
1:& $\frac{1}{2}$ & $0$& $\frac{1}{4}$& $\frac{1}{4}$ \\
2:& $\frac{1}{2}$ & $0$ & $\frac{1}{4}$& $\frac{1}{4}$ \\
3:& $0$ & $\frac{1}{2}$& $\frac{1}{4}$& $\frac{1}{4}$ \\  
4:& $0$ & $\frac{1}{2}$& $\frac{1}{4}$& $\frac{1}{4}$ \\ 
\end{tabular}
\]

Now we consider profile $4$. By SP in relation with profile $3$ and ETE, we must have that $p_{11} + p_{12} = p_{31} + p_{32} = p_{41} + p_{42} = \frac{1}{2}$, $p_{13} = p_{33} = p_{43} = \frac{1}{4}$, and $p_{14} = p_{34} = p_{44} = \frac{1}{4}$. Since $p_{12}$+ $p_{32}$ + $p_{42} \leq 1$, in order to satisfy the unit demand for agents $1$, $3$, and $4$, we must have that at least one of $p_{11}$, $p_{31}$, $p_{41}$ is strictly positive. Thus by OE, we must have $p_{22} = 0$ and $p_{21} = \frac{1}{2}$. Although we cannot pin down a single assignment for this profile, any feasible assignment must be of the form:
\[\begin{tabular}{lllll}
\multicolumn{5}{l}{Profile 4} \\
1:& $\{o_1$ & $o_2\}$,& $o_3$,& $o_4$ \\
2:& $o_1$, & $o_2$, & $o_3$,& $o_4$ \\
3:& $\{o_1$ & $o_2\}$,& $o_3$,& $o_4$ \\  
4:& $\{o_1$ & $o_2\}$,& $o_3$,& $o_4$ \\ 
\end{tabular}
\quad
\begin{tabular}{lllll}
& $o_1$& $o_2$ & $o_3$ & $o_4$ \\
1:& $x$ & $\frac{1}{2}-x$& $\frac{1}{4}$& $\frac{1}{4}$ \\
2:& $\frac{1}{2}$ & $0$ & $\frac{1}{4}$& $\frac{1}{4}$ \\
3:& $y$ & $\frac{1}{2}-y$& $\frac{1}{4}$& $\frac{1}{4}$ \\  
4:& $\frac{1}{2}-x-y$ & $x+y$& $\frac{1}{4}$& $\frac{1}{4}$ \\ 
\end{tabular}
\]
for some $x, y \geq 0$ and $x+y \leq \frac{1}{2}$.

Now we consider profile $5$. Applying the same argument of ordinal efficiency for agent $4$ in profile $2$ to agent $2$ in profile $5$, we get that $p_{22} = 0$. By strategyproofness in relation to profile $1$, we must have that $p_{21} = \frac{1}{4}$ and that  $p_{21} + p_{22} + p_{23} = \frac{3}{4}$. Hence, we get that $p_{23} = \frac{1}{2}$ and $p_{24}= \frac{1}{4}$.
Finally by ETE, we know the probability assignment of the agents $1$, $2$ and $4$ must be identical, thus we get the following assignment:

\[\begin{tabular}{lllll}
\multicolumn{5}{l}{Profile 5} \\
1:& $o_1$, & $o_2$,& $o_3$,& $o_4$ \\
2:& $o_1$, & $\{o_2$ & $o_3\}$,& $o_4$ \\
3:& $o_1$, & $o_2$,& $o_3$,& $o_4$ \\  
4:& $o_1$, & $o_2$,& $o_3$,& $o_4$ \\ 
\end{tabular}
\quad 
\begin{tabular}{lllll}
& $o_1$& $o_2$ & $o_3$ & $o_4$ \\
1:& $\frac{1}{4}$ & $\frac{1}{3}$& $\frac{1}{6}$& $\frac{1}{4}$ \\
2:& $\frac{1}{4}$ & $0$ & $\frac{1}{2}$& $\frac{1}{4}$ \\
3:& $\frac{1}{4}$ & $\frac{1}{3}$& $\frac{1}{6}$& $\frac{1}{4}$ \\  
4:& $\frac{1}{4}$ & $\frac{1}{3}$& $\frac{1}{6}$& $\frac{1}{4}$ \\ 
\end{tabular}
\]

Now we consider profile $6$. By SP in relation with profile $2$, we must have that $p_{21} = \frac{1}{3}$, $p_{22} + p_{23} = \frac{5}{12}$, and $p_{24} = \frac{1}{4}$. By OE, we must have that $p_{22} = 0$, which implies that $p_{23} = \frac{5}{12}$. By SP in relation with profile $5$, we must have that $p_{41} + p_{42} = \frac{7}{12}$, $p_{43} = \frac{1}{6}$, $p_{44} = \frac{1}{4}$. Again, by OE, we must have that $p_{41} = 0$, which implies that $p_{42} = \frac{7}{12}$.  Subsequently, we can fill in the assignments for agents $1$ and $3$ via ETE to get:    
\[\begin{tabular}{lllll}
\multicolumn{5}{l}{Profile 6} \\
1:& $o_1$, & $o_2$,& $o_3$,& $o_4$ \\
2:& $o_1$, & $\{o_2$ & $o_3\}$,& $o_4$ \\
3:& $o_1$, & $o_2$,& $o_3$,& $o_4$ \\  
4:& $\{o_1$ & $o_2\}$,& $o_3$,& $o_4$ \\ 
\end{tabular}
\quad
\begin{tabular}{lllll}
& $o_1$& $o_2$ & $o_3$ & $o_4$ \\
1:& $\frac{1}{3}$ & $\frac{5}{24}$& $\frac{5}{24}$& $\frac{1}{4}$ \\
2:& $\frac{1}{3}$ & $0$ & $\frac{5}{12}$& $\frac{1}{4}$ \\
3:& $\frac{1}{3}$ & $\frac{5}{24}$& $\frac{5}{24}$& $\frac{1}{4}$ \\  
4:& $0$ & $\frac{7}{12}$& $\frac{1}{6}$& $\frac{1}{4}$ \\ 
\end{tabular}
\]

Now we consider profile $7$. By SP in relation with profile $3$, we must have that $p_{21} = \frac{1}{2}$, $p_{22} + p_{23} = \frac{1}{4}$, and $p_{24} = \frac{1}{4}$. By OE, we must have that $p_{22} = 0$, which implies that $p_{23} = \frac{1}{4}$. By SP in relation with profile $6$, we have that $p_{31} + p_{32} =\frac{13}{24}$, $p_{33} = \frac{5}{24}$ and $p_{34} = \frac{1}{4}$. By ETE, agent $4$ gets an equivalent assignment as agent $3$. Notice that in this case, we must have that $p_{31} > 0$ and $p_{41} > 0$ as $p_{32} + p_{42} \leq 1$ and $p_{31} + p_{32} + p_{41}+p_{42} = \frac{13}{12} > 1$, so either $p_{31}$ or $p_{41}$ is strictly positive. This implies that $p_{12} = p_{22} = 0$ in order to satisfy OE. Thus, we get an assignment of the following form:
\[
\begin{tabular}{lllll}
\multicolumn{5}{l}{Profile 7} \\
1:& $o_1$, & $o_2$,& $o_3$,& $o_4$ \\
2:& $o_1$, & $\{o_2$& $o_3\}$,& $o_4$ \\
3:& $\{o_1$ & $o_2\}$,& $o_3$,& $o_4$ \\  
4:& $\{o_1$ & $o_2\}$,& $o_3$,& $o_4$ \\ 
\end{tabular}
\quad
\begin{tabular}{lllll}
& $o_1$& $o_2$ & $o_3$ & $o_4$ \\
1:& $\frac{5}{12}$ & $0$& $\frac{1}{3}$& $\frac{1}{4}$ \\
2:& $\frac{1}{2}$ & $0$ & $\frac{1}{4}$& $\frac{1}{4}$ \\
3:& $z$ & $\frac{13}{24}-z$& $\frac{5}{24}$& $\frac{1}{4}$ \\  
4:& $\frac{1}{12} - z$ & $\frac{11}{24}+z$& $\frac{5}{24}$& $\frac{1}{4}$ \\ 
\end{tabular}
\]

Finally, we consider profile $8$. By SP in relation with profile $4$, we must have that $p_{21} = \frac{1}{2}$, $p_{22} + p_{23} = \frac{1}{4}$, $p_{24} = \frac{1}{4}$. By OE, we must have that $p_{22} = 0$, which implies that $p_{23} = \frac{1}{4}$. By SP in relation with profile $7$ and ETE, we must have that $p_{11} + p_{12} = p_{31} + x_{32} = x_{41} + x_{42} = \frac{5}{12}$, $p_{13} = p_{33} = p_{43} = \frac{1}{3}$, $p_{14} = p_{34} = p_{44} = \frac{1}{4}$. Now consider the partially filled assignment below
\[
\begin{tabular}{lllll}
\multicolumn{5}{l}{Profile 8} \\
1:& $\{o_1$ & $o_2\}$,& $o_3$,& $o_4$ \\
2:& $o_1$, & $\{o_2$& $o_3\}$,& $o_4$ \\
3:& $\{o_1$ & $o_2\}$,& $o_3$,& $o_4$ \\  
4:& $\{o_1$ & $o_2\}$,& $o_3$,& $o_4$ \\ 
\end{tabular}
\quad
\begin{tabular}{lllll}
& $o_1$& $o_2$ & $o_3$ & $o_4$ \\
1:& ? & $?$& $\frac{1}{3}$& $\frac{1}{4}$ \\
2:& $\frac{1}{2}$ & $0$ & $\frac{1}{4}$& $\frac{1}{4}$ \\
3:& ? & ?& $\frac{1}{3}$& $\frac{1}{4}$ \\  
4:& ? & ?& $\frac{1}{3}$& $\frac{1}{4}$ \\ 
\end{tabular}
\]
Notice that this assignment violates the fact that $p_{13} + p_{23} + p_{33} + p_{43} = 1$. Since we used the necessary conditions induced by SP, OE, ETE to pin down all possible assignments for each of the profiles $1$-$7$, and all of possible combinations of the first $7$ profiles lead to this contradiction, it is impossible to write down a random assignment in profile 8 that simultaneously satisfy ETE, OE, and SP in relation to the other $7$ profiles. 

The following graph indicates how the profiles are linked via strategyproofness. (Profile $i$ is denoted $P_i$.)

\begin{figure} [h]
	\begin{center}
		\begin{tikzpicture}[->,>=stealth',shorten >=1pt,auto,node distance=2.8cm,
		semithick]
		\tikzstyle{every state}=[fill=blue,draw=none,text=white]
		
		\node[state] (A)                    {$P_1$};
		\node[state]         (B) [above left of=A] {$P_2$};
		\node[state]         (C) [above left of=B] {$P_3$};
		\node[state]         (D) [above left of=C] {$P_4$};
		\node[state]         (E) [above right of=A] {$P_5$};
		\node[state]         (F) [above left of=E] {$P_6$};
		\node[state]         (G) [above left of=F] {$P_7$};
		\node[state]         (H) [above left of=G] {$P_8$};
		
		\path (A) edge              node {}  (B);	  
		\path (B) edge              node {}  (C);
		\path (B) edge              node {}  (F);
		\path (C) edge              node {}  (D);	  	  
		\path (C) edge              node {}  (G);
		\path (A) edge              node {}  (E);
		\path (E) edge              node {}  (F);
		\path (D) edge              node {}  (H);
		\path (F) edge              node {}  (G);
		\path (G) edge              node {}  (H);
		\end{tikzpicture}
	\end{center}
\end{figure}

For general $n \geq 5$, we extend each of the 8 profiles as follows: agents $1$ through $4$ have the same preference for objects $o_1$ through $o_4$; moreover, these agents have strict preference for the rest of the objects. For every $j = 5, \ldots, n$, agent $j$ is indifference amongst objects $o_1$ through $o_{j}$ and has strict preference for the rest of the objects. Similar to the argument made for general $n$ in theorem \ref{Imp: OE, EF wSP}, we see that by OE, every agent $j = 5, \ldots, n$ receives object $o_j$ with probability $1$.  Consequently, the first 4 agents must be allocated the first 4 objects, leading to the same 8 profiles analyzed earlier.

\end{proof}

\bibliographystyle{plain}
\bibliography{paper}

\end{document}